\theoremstyle{plain}
\newtheorem{theorem}{Theorem}[section]
\theoremstyle{definition}
\title{
Discretization of continuous-time quantum walks via the staggered model with Hamiltonians}
\author{  
Gabriel Coutinho \footnote{Acknowledges the support of grants FAPESP 15/16339-2 and FAPESP 03447-6.} \\ \small Dep. of Computer Science \\ \small IME-USP \\ \small São Paulo, SP, Brazil \\ \small \texttt{coutinho@ime.usp.br} 
\and 
Renato Portugal \footnote{Acknowledges the support of CNPq grant 474143/2013-9.} \\ \small National Lab of Scientific Computing \\ \small LNCC \\ \small Petr\'opolis, RJ, Brazil \\ \small \texttt{portugal@lncc.br}}
\date{  \today  }
\begin{document}

\maketitle

\begin{abstract}
We characterize a close connection between the continuous-time quantum-walk model and a discrete-time quantum-walk version, based on the staggered model with Hamiltonians in a class of Cayley graphs, which can be considered as a discretization of continuous-time quantum walks. This connection provides examples of perfect state transfer and instantaneous uniform mixing in the staggered model. On the other hand, we provide some more examples of perfect state transfer and instantaneous uniform mixing in the staggered model that cannot be reproduced by the continuous-time model. 
\end{abstract}

\newpage
\section{Introduction}

Quantization versions of classical random walks have been actively studied for the past two decades after the seminal papers~\cite{ADZ93,FarhiGutmann,SzegedyQWalk}. In this work we focus on extending some properties of the recent-proposed model of staggered quantum walks~\cite{PortugalSantosFernandesStaggered,Por16,PortugalOliveiraStaggered}. This model is endowed with many interesting features, remarkably, it provides a generalization of the Szegedy's quantum walk model \cite{SzegedyQWalk} and exactly simulates the instances of flip-flop coined quantum walks that employ the Hadamard or Grover coins~\cite{PortugalSzegedyStaggered}. An implementation of this model in the class of triangle-free graphs has also been recently proposed \cite{MoqadamOliveiraPortugalStaggeredImplementation}.

Connections between continuous-time quantum walks and discrete-time coined quantum walks have been extensively analyzed~\cite{Str06,ChildsRelContDiscreQWalks,DB15,PP16} without providing a discretization procedure. In this paper, we show that in a special class of graphs and under certain conditions, the staggered quantum walk model is able to provide a natural discretization of a continuous-time quantum walk when the adjacency matrix can be written as a sum of commuting matrices related to a graph factorization.

Perfect state transfer has been demonstrated in spin chains~\cite{Bos07,Kay10} and can be mathematically modeled using the continuous-time quantum walk on graphs~\cite{KendonTamon,CoutinhoPhD}. In this paper, we show that there is a large class of graphs on which the staggered quantum walk allows perfect state transfer. Part of this class is obtained by discretizing the examples from the continuous-time quantum walk. The remaining part shows that the staggered model allows for certain interesting features in terms of perfect state transfer that the continuous case does not possess. We then proceed to show that related phenomena, such as uniform mixing and periodicity, are also present.

The remainder of the paper is organized as follows. In section~\ref{sec:SQW}, we review the staggered quantum walk model. In section~\ref{sec:disc_of_CTQW}, we present a discretization of continuous-time quantum walks. In section~\ref{sec:tess-factor}, we describe a especial kind of tessellation on Cayley graphs. In section~\ref{sec:PST}, we provide a large class of examples of staggered quantum walks on Cayley graphs of abelian groups that admits perfect state transfer. We also discuss staggered quantum walks on Cayley graphs of non-abelian groups. In section~\ref{sec:IUM}, we provide a large class of examples of staggered quantum walks on Cayley graphs that admits instantaneous uniform mixing.  In section~\ref{sec:conc}, we draw our conclusions.

\section{Staggered quantum walks with \\ Hamiltonians}\label{sec:SQW}

We introduce some of the nomenclature needed along the text. We model a quantum walk on a finite graph by associating particle positions with vertices and hopping directions with edges.  In this fashion, the quantum system is a graph, typically denoted as $X = (V,E)$, where $V$ is the set of vertices and $E \subseteq \binom{V}{2}$ is the set of edges.

A \textit{tessellation} is a partition of $V$ into tiles $C_1,...,C_k$ such that if any two vertices $a$ and $b$ belong to a tile, then $\{a,b\} \in E$. Given a tessellation, a unitary operator $H$ is defined by attaching to each tile $C_i$ a unitary vector $u_i$ such that the support of $u_i$ is precisely the set of vertices in $C_i$, and then making
\[H = 2\sum_{i = 1}^k  |u_i\rangle \langle u_i | - I.\]
This way, $H$ is forcibly unitary and Hermitian, that is, a reflection operator.

The staggered quantum walk model proposed in \cite{PortugalSantosFernandesStaggered} consists in defining (at least) two tessellations, say $T_1$ and $T_2$, such that each edge of the graph belongs to at least one of them, along with the unitary vectors associated to each. The evolution operator is therefore given by
\[U = H_1 H_2.\]
A variation of this model more suitable for practical implementation was introduced in \cite{PortugalOliveiraStaggered}. Again, provided tessellations with their corresponding unitary matrices, and two parameters $\theta_1$ and $\theta_2$, the evolution operator is given by
\begin{align} U = \exp(\ii \theta_1 H_1) \exp(\ii \theta_2 H_2). \label{eq1} \end{align}
The staggered model with Hamiltonians takes advantage of the dual nature of reflection operators; they can either be used as propagators or Hamiltonians. Henceforth in this paper, we consider the model described in equation (\ref{eq1}) and its generalization with more than two tessellations.

The greatest difficulty in dealing with this model comes from the fact that the spectrum of $U$ has little to do with the structure of the underlying graph in the general case. In fact, other than unitary, little can be said about $U$ at all. For this reason, we enforce extra properties. These might come in two flavours: to restrict the choices for the unitary vectors used at each tessellation, and to see that the structure of the tessellations are somehow dependent on each other. In this direction, we focus on the following case
\begin{itemize}
\item The unit vectors attached to each tile in a tessellation are real and uniform, that is, if the tile contains $\gamma$ vertices, the vector has entries equal $1/\sqrt{\gamma}$ at each vertex in the tile, and $0$ otherwise.
\item The matrices $H_i$, corresponding to each tessellation $T_i$, commute.
\end{itemize}
Perhaps quite surprisingly, there is a large class of graphs satisfying both properties above. 

\section{A discretization of continuous-time\\  quantum walks on special graphs}
\label{sec:disc_of_CTQW}

Given a graph $X$, a \textit{tessellation covering} of $X$ consists in a set of tessellations such that each edge of $X$ belongs to at least one tessellation. A tessellation covering is a \textit{factorization} if each edge belongs to precisely one tessellation. A tessellation covering is \textit{uniform} if each tessellation induces a partition into cliques of the same size. Note that although all graphs admit a tessellation factorization, in most cases these tessellations will not be uniform. A typical example of a uniform tessellation factorization is a factorization into perfect matchings, also known as a $1$-factorization.

Assume $X$ has a uniform tessellation covering into $k$ tessellations, let $A_i$ denote the adjacency matrix of the subgraph induced by the $i$th tessellation and let $H_i$ denote the unitary matrix obtained from each tessellation using the real and positive uniform superposition. Let $\gamma_i$ denote the size of the cliques in the $i$th tessellation, and define $\gamma = \sum_{i = 1}^k \gamma_i$.
 
Note that
\begin{align}\label{eq2} \gamma_i H_i = 2 A_i + (2 -\gamma_i) I. \end{align}
Now assume $\{A_1,...,A_k\}$ are commuting matrices. This means that $\{H_1,...,H_k\}$ are also commuting matrices, hence
\[U^T = \left(\prod_{i = 1}^k \exp (\ii \theta_i H_i) \right)^T = \prod_{i = 1}^k \exp (\ii \theta_i H_i)^T,\]
and the quantum dynamics can be analysed locally at each $H_i$. 
Moreover, if the covering is a factorization, then $A(X) = \sum_{i = 1}^k A_i$. Thus it follows that
\begin{align*}
U & = \prod_{i = 1}^k \exp (\ii \theta_i H_i) \\
& = \exp \left( \ii \sum_{i = 1}^k \theta_i \frac{2 A_i + (2 -\gamma_i) I}{\gamma_i} \right) \\
& = \exp \left( \ii \sum_{i = 1}^k \frac{\theta_i}{\gamma_i} \left( 2 - \gamma_i \right) \right) \exp \left( 2 \ii \sum_{i = 1}^k \frac{\theta_i}{\gamma_i} A_i \right),
\end{align*}
and upon choosing $\theta_i$ to be a constant multiple of $\gamma_i$, say $\theta_i = \theta \gamma_i$, it follows that, for any $T \in \Z_+$,
\[U^T = \exp(\ii \theta (2k - \gamma ) T ) \exp(2 \ii \theta T A).\]

This shows that, when a uniform tessellation factorization exists, it is possible to use the staggered quantum walk model from \cite{PortugalOliveiraStaggered} to emulate a continuous quantum walk model, that is, $\{U^0 , U^1 , U^2 , ...\}$ provides a discretization of $\exp(\ii t A)$ in units of $2 \theta$, up to a global phase.

This immediately raises the question of how common such structures are. We devote the next section to study a broad class of examples.

\section{Tessellation-factorizations in Cayley graphs}\label{sec:tess-factor}
\newcommand{\G}{\mathcal{G}}
Given a group $\G$ and a subset $\CC$ of group elements which does not contain the identity $\texttt{id}$ and is closed under taking the inverse, the Cayley graph $\text{Cay}(\G,\CC)$ is the graph whose vertex set is $\G$, and two vertices $g$ and $h$ are adjacent if $gh^{-1} \in \CC$. The set $\CC$ is typically called the connection set.

Consider a staggered quantum walk on the Cayley graph $X = \text{Cay}(\G,\CC)$, where $\G$ is a finite abelian group and $\CC$ is a connection set with the following properties
\begin{itemize}
\item[($1$)] $\G=\langle \CC \rangle$, and
\item[($2$)] $g \in \CC \implies g^k \in \CC$ for  $0<k<\textrm{ord}(g)$, 
\end{itemize}
where $\textrm{ord}(g)$ is the order of $g$. Property~(1) ensures that the graph is connected. Property~(2) implies that if $g\in \CC$ then set $\{g^k : 0\le k<\textrm{ord}(g)\}$ is a clique in $X$ and can be considered as a tile of some uniform tessellation of $X$. 

If  $\CC$ is partitioned as $\CC_1  \cup ... \cup \CC_k$, such that each $\CC_i \cup \texttt{id}$ is a subgroup of $\G$, then this partition induces a uniform factorization of $X$ with $k$ tessellations in the following way. The cosets of $\CC_i \cup \texttt{id}$ in $\G$ are cliques (tessellation tiles) and the union of those cliques is the $i$th tessellation. The tessellation-factorization has no edges in the tessellation intersection and defines a staggered quantum walk on $X$ with $k$ tessellations.  

As before, if $A_i$ is the adjacency matrix of the subgraph induced by the $i$th tessellation, that is, if
\[A_i = A ( \text{Cay}(\G,\CC_i)), \]
then, as the group is abelian, these matrices will commute. Defining $H_i$, $\gamma_i$ and $\theta_i = \theta/\gamma_i$ and $U$ as in the previous section, then powers of $U$ provide a discretization of the continuous-time quantum walk $\exp(\ii t A)$. As a result, phenomena such as perfect state transfer or uniform mixing in continuous time quantum walks (\cite{AdaChanComplexHadamardIUMPST,GodsilCheungPSTCubelike,BasicCirculant,GodsilMullinRoy}) can also be observed in (coinless) discrete time quantum walks.

However, we will see in the examples below that we are able to impose more control over the quantum dynamics if we vary the $\theta_i$s and if we do not require the tessellations to provide a factorization. As a result, we can construct new examples of perfect state transfer, uniform mixing, and related phenomena.

\section{Perfect state transfer}\label{sec:PST}

For a detailed introduction in the topic of perfect state transfer in graphs according to the continuous-time quantum walk model, we refer the reader to \cite[Chapter 2]{CoutinhoPhD}, or \cite{GodsilStateTransfer12}.

\begin{theorem}\label{Theo:PST}
Let $X = \text{Cay}(\G,\CC)$. Suppose $\CC$ is partitioned as $\CC_1,...,\CC_k$, each $\CC_i \cup \texttt{id}$ a subgroup, and at least one ${\CC_i}$ has exactly one element. Define Hermitian commuting matrices $H_i$ to each $\CC_i$ as in Eq.~(\ref{eq2}). Define
\[U = \prod_{i = 1}^k \exp(\ii \theta_i H_i).\]
By conveniently selecting the $\theta_i$, $U^T$ admits perfect state transfer at any chosen time $T$.
\end{theorem}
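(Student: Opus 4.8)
The plan is to diagonalize $U$ explicitly using the characters of $\G$. Since $\G$ is abelian and each $A_i = A(\text{Cay}(\G,\CC_i))$ is a Cayley-graph adjacency matrix, the characters of $\G$ form a common eigenbasis: for a character $\chi$, the vector $|\chi\rangle = |\G|^{-1/2}\sum_{g\in\G}\chi(g)\,|g\rangle$ is an eigenvector of $A_i$ with eigenvalue $\sum_{c\in\CC_i}\chi(c)$ (this sum is real because $\CC_i$ is inverse-closed). The key structural input is that $\CC_i\cup\texttt{id}$ is a subgroup, so the character sum over it collapses by orthogonality: $\sum_{c\in\CC_i\cup\texttt{id}}\chi(c)$ equals $\gamma_i$ if $\chi$ is trivial on $\CC_i\cup\texttt{id}$ and $0$ otherwise. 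Hence $\sum_{c\in\CC_i}\chi(c)$ equals $\gamma_i-1$ in the trivial case and $-1$ otherwise; substituting into Eq.~(\ref{eq2}) and simplifying, I would show that $H_i$ has eigenvalue $+1$ on $|\chi\rangle$ when $\chi$ is trivial on $\CC_i\cup\texttt{id}$ and $-1$ otherwise. This both reconfirms that $H_i$ is a reflection and, crucially, expresses its $\pm1$ pattern entirely in terms of characters.

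Next I would single out the hypothesised singleton, say $\CC_1=\{h\}$, so that $\CC_1\cup\texttt{id}=\{\texttt{id},h\}$ is a subgroup of order two and $h$ is an involution. For such $h$ every character takes the real value $\chi(h)\in\{+1,-1\}$, and comparing with the previous step shows that the $H_1$-eigenvalue on $|\chi\rangle$ is exactly $\chi(h)$. This identification is the heart of the argument: the reflection pattern of $H_1$ literally reads off the character values at the vertex $h$.

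It then remains to choose the angles. The simplest convenient selection is $\theta_i=0$ for $i\ge 2$ and $\theta_1=\pi/(2T)$, so that $U^T=\exp(\ii\,T\theta_1 H_1)$ acts on $|\chi\rangle$ as the scalar $\exp\!\left(\ii\tfrac{\pi}{2}\chi(h)\right)=\ii\,\chi(h)$, using $\chi(h)=\pm1$. Expanding $|\texttt{id}\rangle=|\G|^{-1/2}\sum_\chi|\chi\rangle$ and recalling $|h\rangle=|\G|^{-1/2}\sum_\chi\overline{\chi(h)}\,|\chi\rangle$, I would conclude $U^T|\texttt{id}\rangle=\ii\,|h\rangle$; that is, $|\langle h|U^T|\texttt{id}\rangle|=1$. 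By the vertex-transitivity of the Cayley graph this upgrades to perfect state transfer between every pair $g$ and $gh$.

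Almost all the work is concentrated in the first two steps, and the genuine obstacle is conceptual rather than computational: one must recognise that the $\pm1$ spectrum of a commuting reflection $H_i$ encodes character triviality, and that for an \emph{involution} $h$ these signs coincide with the real numbers $\chi(h)$ — which is precisely why the hypothesis that some $\CC_i$ be a singleton (an order-two subgroup) is needed. A useful sanity check, and an alternative route, is that the singleton tessellation $\text{Cay}(\G,\{h\})$ is a perfect matching whose edges are the pairs $\{g,gh\}$; on each edge $H_1$ is the $2\times2$ swap reflection, and $\exp(\ii\tfrac{\pi}{2}H_1)$ is the swap up to the global phase $\ii$. If one instead wished to keep the remaining $\theta_i$ nonzero, the additional care would lie in choosing them so that the accumulated character phases still collapse $U^T|\texttt{id}\rangle$ onto a single vertex.
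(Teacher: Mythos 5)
Your proof is correct, but it reaches the conclusion by a different route than the paper, and with two caveats worth flagging. The paper's proof stays in the ``matrix'' picture: it observes that each $A_i$ is a disjoint union of $\gamma_i$-cliques with eigenvalues $\gamma_i-1$ and $-1$, hence integral and periodic, chooses $\theta_i$ so that $2T\theta_i/\gamma_i$ is an even multiple of $\pi$ (making each factor $\exp(\ii\theta_i T H_i)$ a global phase times $I$), and then uses $A_1^2=I$ to get $\exp(\ii\tfrac{\pi}{2}A_1)=\ii A_1$, so $U^T=\alpha A_1$. Your character-theoretic diagonalization proves exactly the same spectral facts (the $\pm1$ pattern of $H_i$ on $|\chi\rangle$ is the Fourier-side statement that the clique graphs have eigenvalues $\gamma_i-1$ and $-1$), so the two arguments are essentially equivalent computations in dual pictures. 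The substantive differences are: (i) you set $\theta_i=0$ for $i\ge 2$, which is a legal ``convenient selection'' but degenerate --- it reduces the dynamics to the perfect matching alone and discards the staggered structure; the paper instead keeps every tessellation active and arranges for the non-matching factors to resolve to the identity only at the chosen time $T$ (via periodicity), which is what makes the result interesting as a statement about walks on $X$ rather than on a matching. Your closing remark shows you see this, but the nonzero choice is where the actual content lies. (ii) Your argument is tied to $\G$ abelian, since you diagonalize with one-dimensional characters; the theorem as stated only assumes the $H_i$ commute, and the paper's integrality/periodicity argument goes through verbatim in the non-abelian case discussed after Theorem~\ref{prop1}, whereas yours would need to be redone with irreducible representations. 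Neither point is a gap for the abelian setting, but both narrow the scope relative to the paper's proof.
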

\begin{proof}
Suppose ${\CC_1}$ has only one element, which must be of order 2. Denote by $\{A_1,...,A_k\}$ the adjacency matrices of each tessellation induced by the cosets of $\CC_i \cup \texttt{id}$ in $\G$. Since $A_1$ corresponds to $\CC_1$, which comprises one order-2 element, it follows that $A_1$ is a perfect matching. Let $\gamma_i$ be the size of the cliques in $A_i$, that is, $\gamma_i$ is the cardinality of $\CC_i \cup \texttt{id}$. It follows that the distinct eigenvalues of each $A_i$ are $\gamma_i - 1$ and $-1$, hence these are integral graphs, and thus periodic. This means that if $2 T \theta_i/\gamma_i$ is an even multiple of $\pi$, then $\exp(2 \ii (\theta_i / \gamma_i) T A_i) = I$. Upon choosing $\theta_1$ such that $T \theta_1 = \pi/2$, it follows that
\[U^T = \alpha A_1,\]
where $\alpha$ is complex number of absolute value $1$, depending on the $\theta_i$s and $\gamma_i$s. Hence perfect state transfer occurs in this case. This construction can be carried out in any abelian group of even order, as an element of order 2 necessarily exists.
\end{proof}

If more elements of order $2$ are available in the connection set, say $g_1,...,g_\ell$, perfect state transfer between the vertices corresponding to $\texttt{id}$ and any product of the $g_i$s can be manufactured by conveniently selecting the $\theta_i$s for a given $T$. 

In Figure \ref{fig1}, we depict a Cayley graph for $\Z_2 \times \Z_2 \times \Z_3$, with connection set ${\CC}=\{(1,0,0) , (0,1,0) , (0,0,1) , (0,0,2)\}$. Let $H_1$, $H_2$ and $H_3$ be the unitary Hermitian matrices obtained from the uniform superposition on the tessellations given by ${\CC_1}=\{(1,0,0)\}$, ${\CC_2}=\{(0,1,0)\}$ and ${\CC_3}=\{(0,0,1) , (0,0,2)\}$ respectively. The corresponding tessellations consist of horizontal edges (including curved ones), vertical edges and edges in the triangles, respectively. Define $U$ as
\[U = \exp(\ii \theta_1 H_1) \exp(\ii  \theta_2 H_2) \exp(\ii \theta_3 H_3).\]
If $\theta_1 = \theta_2 = \theta$, and $\theta_3 = 12 \theta$, then perfect state transfer occurs between $a_i$ and $b_i$ at time $T = \pi /2\theta$. By making $\theta_2 = 2 \theta$, then perfect state transfer occurs between $a_i$ and $c_i$.
\begin{figure}
\begin{center}\includegraphics[]{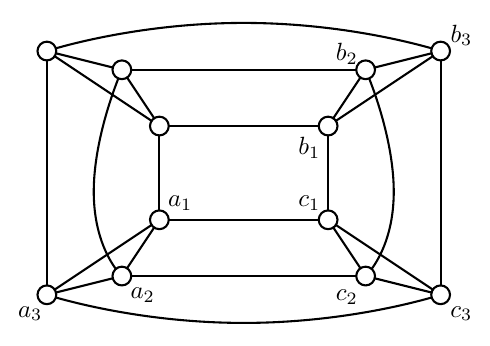}\end{center}
\caption{Example of Cayley graph admitting perfect state transfer in the staggered quantum walk model.}
\label{fig1}
\end{figure}

In any example of perfect state transfer at time $T$ based on Theorem~\ref{Theo:PST}, the quantum walk will be periodic with period $2T$, because $U^{2T}=I$ modulo a global phase. It is possible to have periodicity without perfect state transfer if $\theta$ is a rational multiple of $\pi$ and if the perfect state transfer based on Theorem~\ref{Theo:PST} would be produced at a rational (non-integer) number of steps, say $T=m/n$, where $m,n$ are coprime integers. In this case, the staggered quantum walk does not have perfect state transfer, but it is periodic with period $m$.

\subsection{Example with an edge in the tessellation \\ intersection}

It is not necessary for the connection set to be partitioned into disjoint subsets. As long as each element in the connection set belongs to a subgroup entirely contained in the connection set, a set of tessellations containing all edges can be defined. However in this case, we no longer have a tessellation factorization, but simply a (uniform) tessellation covering.

Here is an example. Let $\G= \Z_2\times \Z_4$ and \[\CC=\{(1,0),(0,1),(0,2),(0,3),(1,1),(1,3)\}.\] Suppose that $\CC_1=\{(1,0)\}$, $\CC_2=\{(0,1),(0,2),(0,3)\}$, and $\CC_3 = \{ (1,1), (0,2),\allowbreak (1,3) \}$. Notice that $\CC_2\cap\CC_3$ is nonempty, which means that their corresponding tessellations will have an edge in the intersection. Figure \ref{fig2} depicts the three tessellations separately.

\begin{figure}
\begin{center}\includegraphics[]{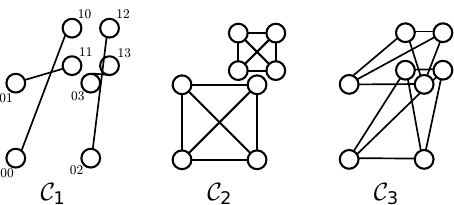}\end{center}
\caption{Example of a tessellation covering with edges in the intersection in which perfect state transfer still occurs.}
\label{fig2}
\end{figure}

If $\theta_1 = \theta$ and $\theta_2 = \theta_3 = 4 \theta$ then perfect state transfer occurs between the vertices connected by the edges that belong to $\CC_1$ at time $T = \pi /2\theta$, for instance, from vertex $(1,2)$ to  $(0,2)$. Notice again that we have to choose $\theta$ so that $T$ is a positive integer. Otherwise, if $\pi /2\theta$ is a rational number $m/n$ with $n>1$, there is no perfect state transfer, but the walk is periodic modulo a global phase with period $m$.

In this example, edges associated with generator $(0,2)$ in the Cayley graph $\G$ belong to tessellations $\CC_2$ and $\CC_3$. Examples with edges in the tessellation intersection are interesting because staggered quantum walks based on those tessellations do not reduce to the coined or Szegedy's models~\cite{Por16}. 

\subsection{Non-abelian case}

So far we have shown examples using Cayley graphs of abelian groups. Examples with non-abelian groups can be provided as well after taking an extra care of checking that the matrices $H_i$ used in Theorem~\ref{Theo:PST} commute. As a guide to choose connection sets, we have the following result.

\begin{theorem} \label{prop1}
Let $A_1$ and $A_2$ be the adjacency matrices of $\text{Cay}(\G,\CC_1)$ and $\text{Cay}(\G,\CC_2)$, respectively, where $C_1=\CC_1 \cup \texttt{id}$ and $C_2=\CC_2 \cup \texttt{id}$ are subgroups of $\G$.
$A_1$ and $A_2$ commute if and only if $C_1\cdot C_2 = C_2\cdot C_1$.
\end{theorem}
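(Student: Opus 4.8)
The plan is to reduce the commutativity of the two adjacency matrices to the purely set-theoretic statement about the product sets $C_1 C_2$ and $C_2 C_1$ by passing to the \emph{subgroup membership matrices}. Write $P_i$ for the $\G \times \G$ matrix with $(P_i)_{g,h} = 1$ precisely when $gh^{-1} \in C_i$. Since $C_i = \CC_i \cup \{\texttt{id}\}$ and the identity contributes exactly the diagonal, we have $P_i = A_i + I$, so that $A_1 A_2 - A_2 A_1 = P_1 P_2 - P_2 P_1$. Hence it suffices to decide when $P_1$ and $P_2$ commute, and the diagonal corrections cancel cleanly in the commutator.

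Next I would compute the product entries directly. Expanding $(P_1 P_2)_{g,h} = \sum_x (P_1)_{g,x}(P_2)_{x,h}$ and setting $u = gx^{-1}$, $v = xh^{-1}$, each summand survives exactly when $u \in C_1$ and $v \in C_2$, and then $uv = gh^{-1}$. The map $x \mapsto (gx^{-1}, xh^{-1})$ is a bijection between the surviving indices and the pairs $(u,v) \in C_1 \times C_2$ with $uv = gh^{-1}$, so the entry depends only on $s := gh^{-1}$ and counts factorizations of $s$ through $C_1 \cdot C_2$. The key counting step is that this number equals $|C_1 \cap C_2|$ whenever $s \in C_1 C_2$, and is $0$ otherwise: fixing one factorization $s = u_0 v_0$, any other factorization $s = uv$ yields $w := u^{-1} u_0 = v v_0^{-1} \in C_1 \cap C_2$, and conversely every $w \in C_1 \cap C_2$ gives the distinct factorization $(u_0 w^{-1}, w v_0)$, establishing a bijection with $C_1 \cap C_2$. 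Thus $(P_1 P_2)_{g,h} = |C_1 \cap C_2| \cdot \mathds{1}[\, gh^{-1} \in C_1 C_2 \,]$, and symmetrically $(P_2 P_1)_{g,h} = |C_1 \cap C_2| \cdot \mathds{1}[\, gh^{-1} \in C_2 C_1 \,]$.

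Since $|C_1 \cap C_2|$ is a common nonzero constant and $gh^{-1}$ ranges over all of $\G$ as $g,h$ vary, the two matrices agree if and only if their indicator supports coincide, that is, $C_1 C_2 = C_2 C_1$ as subsets of $\G$; this is exactly the claim (equivalently, $C_1 C_2$ is a subgroup). I expect the main obstacle to be the counting lemma of the second paragraph, which pins down the constant $|C_1 \cap C_2|$; it is the standard fact that distinct factorizations differ by an element of $C_1 \cap C_2$. The remaining care is bookkeeping with the Cayley-graph convention $g \sim h \iff gh^{-1} \in \CC$, so that the product $P_1 P_2$ is supported on $C_1 C_2$ rather than $C_2 C_1$; everything else is routine.
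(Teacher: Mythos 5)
Your proposal is correct and follows essentially the same route as the paper: pass to $A_i+I$, observe that every nonzero entry of the product equals $|C_1\cap C_2|$ (your factorization count is the same quantity as the paper's coset-intersection count $|(C_1\cdot g)\cap(C_2\cdot h)|$), and reduce commutativity to equality of the supports, i.e.\ of the product sets. Your final step is slightly cleaner than the paper's case analysis, since identifying the supports of $P_1P_2$ and $P_2P_1$ explicitly as $\{(g,h): gh^{-1}\in C_1C_2\}$ and $\{(g,h): gh^{-1}\in C_2C_1\}$ gives both directions at once.
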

\begin{proof}
First note that $A_1$ and $A_2$ commute if and only if $(A_1 + I)$ and $(A_2 + I)$ commute. Now let $g$ and $h$ be elements of $\G$, and hence also vertices of $\text{Cay}(\G,\CC_1)$ and $\text{Cay}(\G,\CC_2)$.

By the definition of adjacency matrix of a Cayley graph, $(A_1+I)_{g a}=1 \Leftrightarrow a \in C_1\cdot g$ and $(A_2+I)_{g a}=1 \Leftrightarrow a \in C_2\cdot g$. Then, 
\[[(A_1 + I)(A_2+I)]_{gh} = \sum_{a\in \G}(A_1+I)_{g a}(A_2+I)_{a h} = \left|(C_1\cdot g)\cap (C_2 \cdot h)\right|.\] 
Likewise, $\sum_{a\in \G}(A_2+I)_{g a}(A_1+I)_{a h} = \left|(C_2\cdot g)\cap (C_1\cdot h)\right|$. Moreover, note that if $\left|(C_1\cdot g)\cap (C_2 \cdot h)\right| \neq 0$, then there is $a_1 \in C_1$ and $a_2 \in C_2$ such that $a_1 g = a_2 h$. Then $C_1 \cdot g = C_1 \cdot (a_1^{-1} a_2 h) = C_1 \cdot a_2 h$. Thus,
\begin{align}\left|(C_1\cdot g)\cap (C_2 \cdot h)\right| & = \left|(C_1\cdot a_2 h)\cap (C_2 \cdot h)\right| = \left|(C_1\cdot a_2 )\cap C_2 \right| \nonumber\\ & = \left|C_1\cap C_2 \cdot a_2^{-1} \right|= \left|C_1\cap C_2 \right|. \label{eq3}\end{align}
Likewise, if $\left|(C_1\cdot h)\cap (C_2 \cdot g)\right| \neq 0$ then $\left|(C_1\cdot h)\cap (C_2 \cdot g)\right| = \left|C_1\cap C_2 \right|$. We now proceed to show that $C_1\cdot C_2=C_2\cdot C_1$ if and only if, for all $g,h\in \G$, $\left|(C_1\cdot g)\cap (C_2 \cdot h)\right|=\left|(C_2\cdot g)\cap (C_1\cdot h)\right|$, which because of (\ref{eq3}) is equivalent to showing that these intersections are either both empty or both non-empty.

Suppose that $\left|(C_1\cdot g)\cap (C_2 \cdot h)\right| \neq 0$ and $\left|(C_1\cdot h)\cap (C_2 \cdot g)\right| = 0$ (the opposite case is analogous). There is $a_1 \in C_1$ and $a_2 \in C_2$ such that $a_1 g = a_2 h$. Then clearly $hg^{-1} \in C_2\cdot C_1$. However, if $hg^{-1} \in C_1\cdot C_2$ then $(C_1\cdot h)\cap (C_2 \cdot g)$ is also non-empty. Hence, $C_2\cdot C_1 \neq C_1\cdot C_2$. This shows that $C_2\cdot C_1 = C_1\cdot C_2$ implies $A_1 A_2 = A_2 A_1$.

Now assume that $|C_1\cdot C_2| > |C_2\cdot C_1|$ (the opposite case is analogous). Let $a_1 \in C_1$ and $a_2 \in C_2$ be such that $a_1 a_2 \notin C_2 \cdot C_1$. It follows that $\left|(C_1 \cdot a_2^{-1} )\cap (C_2 \cdot a_1)\right| = 0$, then $(A_2+I)(A_1+I)_{a_1 a_2^{-1}} = 0$, however
\[(C_1 \cdot a_1 )\cap (C_2 \cdot a_2^{-1}) = C_1 \cap C_2,\]
which contains at least the identity element. Thus ${(A_1+I)(A_2 +I)_{a_1 a_2^{-1}} \neq 0}$. This shows that $A_1 A_2 = A_2 A_1$ implies $C_1 \cdot C_2 = C_2 \cdot C_1$.
\end{proof}

If $\G$ is non-abelian, a staggered quantum walk on $\text{Cay}(\G,\CC)$ admits PST if the connection set $\CC$ can be written as $\CC=\CC_1\cup ...\cup\CC_k$ so that  $C_i=\CC_i \cup \texttt{id}$ and $C_j=\CC_j \cup \texttt{id}$ are pairwise commuting subgroups of $\G$ for all $i$ and $j$ and at least one $\CC_i$ has exactly one element.

Theorem~\ref{prop1} poses a strong restriction on the possible choices of connection sets. It seems that after taking into account this restriction any instance of perfect state transfer with a non-abelian group $\G$ can be reproduced by using some abelian groups of the same order in place of $\G$. Notice that in our construction it does not matter whether $C_i$ is abelian or not.  

\section{Instantaneous uniform mixing}\label{sec:IUM}

We say that the discrete time quantum walk described by the $n \times n$ evolution operator $U$ admits instantaneous uniform mixing at time $T$ if $U^T$ is a flat complex matrix, that is, all its entries have absolute value equal to $1/\sqrt{n}$. In continuous time quantum walks, instantaneous uniform mixing has been studied in, for instance,  \cite{AdaChanComplexHadamardIUMPST}, \cite{GodsilMullinRoy}, \cite{TamonAdamczakUniformMixingCycles}, \cite{TamonAhmadiUniformMixingCirculants}, \cite{GodsilZhanIUMCayley}. In \cite{TamonAhmadiUniformMixingCirculants}, it was shown the instantaneous uniform mixing occurs in a complete graph $K_n$ if and only if $n=2,3,4$.

\begin{theorem}\label{Theo:IUM}
	Let $X = \text{Cay}(\G,\CC)$ be a Cayley graph for an abelian group. Suppose $\CC$ is partitioned as $\CC_1,...,\CC_k$, each $\CC_i \cup \texttt{id}$ a group of order at most $4$, and $\prod_{i = 1}^k (|\CC_i| + 1) = |\G|$. Let $A_i$ be the adjacency matrix of each $\text{Cay}(\G,\CC_i)$, and define Hermitian matrices $H_i$ to each $\CC_i$ as in Eq.~(\ref{eq2}). Thus, for a suitable choice of $\theta_i$s,
	\[U = \prod_{i = 1}^k \exp(\ii \theta_i H_i)\]
	admits instantaneous uniform mixing at any chosen time $T$.
\end{theorem}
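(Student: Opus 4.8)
The plan is to reduce the statement to the mixing behaviour of small complete graphs and then to reassemble the global operator using the product structure of the group. First I would record what each tessellation looks like: since $C_i = \CC_i \cup \texttt{id}$ is a subgroup and $\CC_i = C_i \setminus \{\texttt{id}\}$, every coset of $C_i$ is a clique, so $\text{Cay}(\G,\CC_i)$ is a disjoint union of complete graphs $K_{\gamma_i}$ with $\gamma_i = |C_i| \in \{2,3,4\}$. Because the $A_i$ commute (the group is abelian) the $H_i$ commute too, and using Eq.~(\ref{eq2}) one factors
\[ U^T = \exp\!\left(\ii T \sum_{i=1}^k \tfrac{\theta_i}{\gamma_i}(2-\gamma_i)\right)\prod_{i=1}^k \exp\!\left(\tfrac{2\ii T \theta_i}{\gamma_i} A_i\right), \]
so that, up to a global phase, $U^T = \prod_i \exp(\ii s_i A_i)$ with independent free parameters $s_i = 2T\theta_i/\gamma_i$.

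Second, I would invoke the local mixing fact recalled just before the statement: $K_n$ admits instantaneous uniform mixing exactly for $n=2,3,4$, so for each $i$ there is a value $s_i^\ast$ at which $\exp(\ii s_i^\ast A(K_{\gamma_i}))$ is flat, with all entries of modulus $1/\sqrt{\gamma_i}$ (for $K_2$ one may take $s_i^\ast=\pi/4$). Since $A_i$ is block diagonal over the cosets of $C_i$ with identical blocks $A(K_{\gamma_i})$, the matrix $\exp(\ii s_i^\ast A_i)$ is then flat on each coset and vanishes between distinct cosets. Given the target time $T$, I would simply set $\theta_i = \gamma_i s_i^\ast/(2T)$; this is the \emph{suitable choice} of the $\theta_i$, and it can be made for every prescribed $T \in \Z_+$.

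The decisive step is to pass from this cosetwise flatness to flatness of the whole $|\G|\times|\G|$ matrix $U^T$. Here I would use the direct-sum decomposition $\G \cong C_1 \oplus \cdots \oplus C_k$ afforded by the factorization, which identifies $\mathbb{C}^{\G}$ with $\bigotimes_{i=1}^k \mathbb{C}^{C_i}$. Under this identification $A_i$ acts as $A(K_{\gamma_i})$ on the $i$-th tensor factor and as the identity on the others, whence $\exp(\ii s_i^\ast A_i) = I \otimes \cdots \otimes \exp(\ii s_i^\ast A(K_{\gamma_i})) \otimes \cdots \otimes I$ and
\[ U^T = \bigotimes_{i=1}^k \exp(\ii s_i^\ast A(K_{\gamma_i})) \qquad (\text{up to a global phase}). \]
A Kronecker product of flat matrices is flat, and its entries have modulus $\prod_i 1/\sqrt{\gamma_i} = 1/\sqrt{\prod_i \gamma_i} = 1/\sqrt{|\G|}$, which is exactly instantaneous uniform mixing.

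The main obstacle I anticipate is precisely this last assembly step: cosetwise flatness does not by itself force global flatness, and the tensor argument hinges on the $C_i$ forming an \emph{independent} family, i.e. on $\G$ being their internal direct sum. Pairwise trivial intersection, which the factorization guarantees, is not sufficient once $k\ge 3$. I would therefore either arrange the hypotheses so that the factorization realizes a genuine direct-sum decomposition (as in the worked examples), or treat the general abelian case by diagonalizing in the character basis: each eigenvalue factors as $\prod_i \mu_i(\chi)$, with $\mu_i(\chi) = e^{\ii s_i^\ast(\gamma_i-1)}$ when $\chi|_{C_i}$ is trivial and $e^{-\ii s_i^\ast}$ otherwise, and one then has to verify that the resulting $\G$-circulant has constant-modulus symbol. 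Checking that constant-modulus condition — equivalently, that the product of the local mixings remains flat — is where the real content of the theorem sits.
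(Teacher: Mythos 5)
Your proposal takes essentially the same route as the paper's proof: reduce to flatness of each local factor $\exp(\ii s_i^\ast A(K_{\gamma_i}))$ using the fact that $K_n$ mixes uniformly exactly for $n\le 4$, then assemble the commuting local factors into a globally flat $U^T$. The only real difference is how the assembly is phrased. You tensor the factors over an internal direct sum $\G\cong C_1\oplus\cdots\oplus C_k$; the paper instead proves a convolution identity (its Claim~2): for subgroups $\S,\T$ with $\S\cap\T=\{\texttt{id}\}$, the matrix $M_\S$ supported on the cosets of $\S$ with entries of modulus $1/\sqrt{|\S|}$ satisfies $M_\S M_\T=M_{\S\cdot\T}$, and then it inducts on $k$, using that $\CC$ generates $\G$. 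These are the same argument in different clothing, and they carry exactly the same hidden hypothesis: the paper's induction needs $(C_1\cdots C_j)\cap C_{j+1}=\{\texttt{id}\}$ at every step, i.e.\ the $C_i$ must form an independent family with $\prod_i\gamma_i=|\G|$ --- which is precisely the obstacle you flag at the end. A partition of $\CC$ only guarantees pairwise trivial intersections, so the dependent case (e.g.\ $\G=\Z_2^3$ with the four order-two subgroups generated by $(1,0,0)$, $(0,1,0)$, $(0,0,1)$, $(1,1,1)$) is not actually covered by the published induction either; there the ``natural'' angles can produce zero entries, and one must argue differently (indeed flatness with modulus $1/\sqrt{\prod_i\gamma_i}$ is impossible when $\prod_i\gamma_i>|\G|$, by unitarity). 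So your proposal matches the paper's proof in both substance and scope, and the gap you honestly identify is one the paper's own argument shares rather than one you failed to close relative to it; your suggestion of working in the character basis is the natural way to attack the genuinely dependent case.
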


\begin{proof}
	First note that because the groups $\CC_i \cup \texttt{id}$ have order at most $4$, the matrices $A_i$ will be a collection of disjoint edges, disjoint triangles, or disjoint complete graphs of size $4$, denoted by $K_4$. Each of these components admits instantaneous uniform mixing. For each of these graphs define $H_i$ as in equation (\ref{eq2}), thus the matrices $H_i$ admit uniform mixing at times $\pi/4$, $\pi/3$, or $\pi/2$, respectively. This is because the matrix $H_i$ is the direct sum of $n/\gamma_i$ Grover operators of dimension $\gamma_i$ and the absolute value of any diagonal entry of $\exp(\ii \theta_i H_i)$ is equal to a nonzero nondiagonal entry if and only if $|\sin\theta_i|=\sqrt n/2$, which indeed has real solutions only for $n\le 4$. Therefore, selecting
	\begin{equation}
		\theta_i = 
		\begin{cases}
			\frac{\pi }{ 4T}, &\quad\text{if }A_i\text{ is a collection of disjoint edges,}\\
			\frac{\pi}{3T}, &\quad\text{if }A_i\text{ is a collection of disjoint triangles,} \\
			\frac{\pi }{ 2T}, &\quad\text{if }A_i\text{ is a collection of  disjoint }K_4\text{s,} \ 
		\end{cases}
	\end{equation}
	it follows that, for all $i$, $\exp(\ii \theta_i H_i T)$ contains entries that are either $0$ or complex numbers of absolute value equal to $1/\sqrt{\gamma_i}$, where $\gamma_i$ is the order of $\CC_i \cup \texttt{id}$. Moreover, an entry $(a,b)$ is non-zero if and only if $a^{-1}\cdot b \in \CC_i \cup \texttt{id}$. Now we proceed to show that $U$ as defined in the statement admits instantaneous uniform mixing.
	
	Given a subgroup $\S$ of $\G$, let $M_\S$ be a matrix of dimension $\left|\G\right|$ such that entry $(a,b)$ is a complex number of absolute value $1/\sqrt{|\S|}$ if $a^{-1}\cdot b \in \S$, $0$ otherwise. Let $\T$ be another subgroup and define $M_\T$ analogously. Let $\RR = \S \cdot \T$.
	\begin{itemize}
		\item[] Claim 1: Each element of $\RR$ can be uniquely written as $a_1 a_2$ with $a_1 \in \S$ and $a_2 \in \T$ if and only if $|\RR| = |\S| |\T|$.  In fact, $|\S| |\T| \geq |\RR|$, and equality holds if and only if there are no duplicates in the $\S \cdot \T$. Note that both conditions are equivalent to the intersection of $\S$ and $\T$ being trivial. \\
		\item[] Claim 2: If the conditions in Claim 1 hold, then the product $M_\RR = M_\S \, M_\T$ is a matrix of dimension $\left|\G\right|$ such that entry $(a,b)$ is a complex number of absolute value $1/\sqrt{|\RR|}$ if $a^{-1}\cdot b \in \RR$, $0$ otherwise. In fact, note that
		\[(M_\S \, M_\T)_{\texttt{id},a} = \sum_{b \in \G} (M_\S)_{\texttt{id},b} (M_\T)_{b,a}.\]
		If $a \in \RR$ then, from Claim $1$ and the definitions of $M_\S$ and $M_\T$, there is a unique $b \in \S$ such that $a =b \cdot (b^{-1}\cdot a)$, and $(b^{-1}\cdot a) \in \T$. That is, a unique $b \in \G$ such that $(M_\S)_{\texttt{id},b} \neq 0 \neq (M_\T)_{b,a}$. Moreover, the product of these entries will be a complex number of absolute value $1/\sqrt{|\S|\, |\T|}$. From Claim 1, this is $1/\sqrt{|\RR|}$. On the other hand, if $a \not\in \RR$, there is no $b$ such that $(M_\S)_{\texttt{id},b}\neq 0 \neq (M_\T)_{b,a}$, otherwise $b \cdot (b^{-1}\cdot a) = a \in \S \cdot \T = \RR$. \end{itemize} 
	Recall the hypothesis that the product of the cardinalities of the subgroups $\CC_i \cup \texttt{id}$ is equal to $|\G|$. Because these subgroups generate $\G$, we have
	\[|\G| = \left|\prod_{i = 1}^{k} (\CC_i \cup \texttt{id}) \right| \leq \prod_{i = 1}^{k} |\CC_i \cup \texttt{id}| = |\G|.\]
	Thus each element of $\G$ is written uniquely as a product of elements in each $\CC_i \cup \texttt{id}$. Therefore we can recursively split the product into parts, and using Claims 1 and 2, it follows that $U$ is a matrix in which all entries are complex numbers of absolute value $1/\sqrt{|\G|}$.
\end{proof}

Note that the hypothesis that $|\G| = \prod (|\CC_i| + 1)$ is equivalent to saying that $\G \cong \bigotimes \CC_i \cup \texttt{id}$. The fact that these factors are also determining the connection set implies that the theorem generates, up to isomorphism, a unique Cayley graph admitting uniform mixing for each abelian group of order $2^k 3^m$ that has no element of orders $8$ or $9$.

We note however that this hypothesis is not strictly necessary to achieve uniform mixing as described, given that we do know examples in which the connection set is denser and yet uniform mixing occurs. For example, any hypercube in which antipodal points have been made adjacent \cite{AdaChanComplexHadamardIUMPST}. However we also know examples where the hypothesis fails and uniform mixing cannot occur.

Finally, recall that if $n > 4$ then $K_n$ does not admit instantaneous uniform mixing in the continuous-time model. The arguments in the proof above also show that if $\CC$ is partitioned into subgroups that generate each element of $\G$ uniquely and at least one of these subgroups has an element of order larger than $4$, then there is no choice of $\theta_i$s that would allow for instantaneous uniform mixing to occur in $\Cay(\G,\CC)$ using the staggered quantum walk model.

\section{Conclusions}\label{sec:conc}

The main results of this work are Theorems~\ref{Theo:PST} and~\ref{Theo:IUM}, which describe a large class of examples of staggered quantum walks on Cayley graphs of abelian groups that admit either perfect state transfer or instantaneous uniform mixing. We have also discussed related topics such as periodicity and extensions using non-abelian groups.

Throughout the text, we presented a close connection between the continuous-time and the staggered quantum walk model. Under some assumptions, when we can cover the graph with a uniform tessellation, the staggered model provides a natural discretization of continuous-time quantum walks.

\printbibliography

\end{document}